\renewcommand{\paragraph}[1]{\vspace{0.3em}{\bf\emph{#1.}}}
\newenvironment{remark}{\noindent\textbf{Remark.}}{}
\newenvironment{example}{\vspace{0.4em}\noindent\textit{Example.}}{}
\def\etal{\text{\textit{et~al.}}}
\newskip\@bigflushglue \@bigflushglue = -100pt plus 1fil
\def\bigcentering{\let\\\@centercr\rightskip\@bigflushglue%
\leftskip\@bigflushglue
\parindent\z@\parfillskip\z@skip}
\newcommand{\set}[1]{\ensuremath{\left\{#1\right\}}}
\newcommand{\card}[1]{\ensuremath{{\left|{#1}\right|}}}
\newcommand{\prob}[2][]{\ensuremath{{\mathbb P}_{{#1}}\!\left[{#2}\right]}}
\newcommand{\expect}[2][]{\ensuremath{{\mathbb E}_{{#1}}\!\left[{#2}\right]}}
\def\smexp#1{\mathrm{e}^{#1}}
\def\lgexp#1{\ensuremath{\exp{\left({#1}\right)}}}
\def\N{{\ensuremath{{\mathbb N}}}}
\def\BernoulliLaw#1{\ensuremath{\mathrm{Ber}\!\left({#1}\right)}}
\def\GeometricLaw#1{\ensuremath{\mathrm{Geo}\!\left({#1}\right)}}
\def\PoissonLaw#1{\ensuremath{\mathrm{Poi}\!\left({#1}\right)}}
\def\ASDeath{\mbox{\ding{61}}}
\title{Analytic Samplers and the Combinatorial Rejection Method}
\author{%
  Olivier Bodini\thanks{LIPN, UMR 7030, Universit\'e Paris 13/CNRS,
    F-93430 Villetaneuse, France.}\and %
  J\'{e}r\'{e}mie Lumbroso\thanks{Department~of~Computer~Science,
    Princeton University, 35~Olden~Street, Princeton, NJ 08540, USA.}
  \and %
  Nicolas Rolin$\text{}^*$}
\date{}
\newcommand{\tASamp}[2][z,a]{\ensuremath{\cramped{\Gamma#2(\cramped{#1})}}}
\begin{document}

\maketitle

\begin{abstract}
  Boltzmann samplers, introduced by Duchon~\textit{et al.} in 2001, make
  it possible to uniformly draw approximate size objects from any class
  which can be specified through the symbolic method. This, through by
  evaluating the associated generating functions to obtain the correct
  branching probabilities.

  But these samplers require generating functions, in particular in the
  neighborhood of their sunglarity, which is a complex problem; they also
  require picking an appropriate tuning value to best control the size of
  generated objects. Although Pivoteau~\etal have brought a sweeping
  question to the first question, with the introduction of their Newton
  oracle, questions remain.


  By adapting the rejection method, a classical tool from the random, we
  show how to obtain a variant of the Boltzmann sampler framework, which
  is tolerant of approximation, even large ones. Our goal for this is
  twofold: this allows for exact sampling with approximate values; but
  this also allows much more flexibility in tuning samplers. For the class
  of simple trees, we will show how this could be used to more easily
  calibrate samplers.
\end{abstract}

\begin{figure*}
  \centering
  \includegraphics[scale=0.6]{./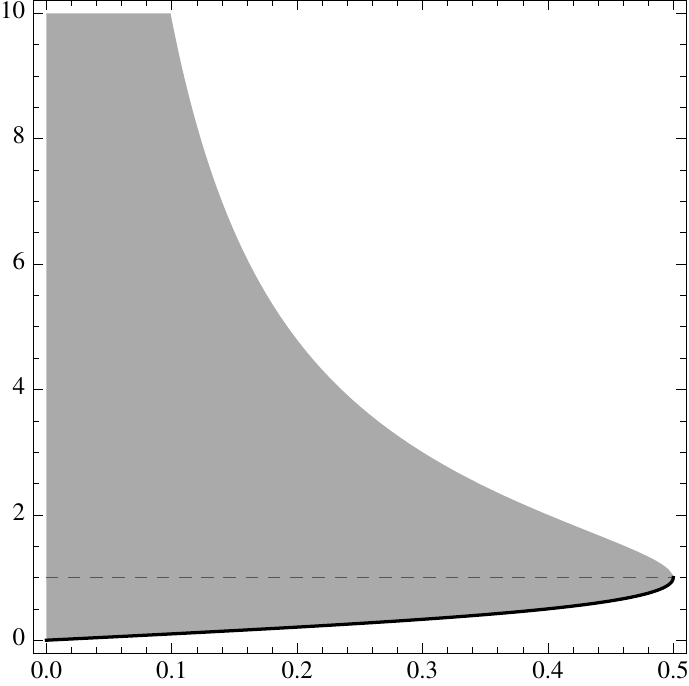}
  \caption{\label{fig:bidon}Plot associated with the combinatorial
    specification of binary trees where all nodes are counted,
    $\cls{B}=\clsAtom+\clsAtom\times \cramped{\cls{B}^2}$. The bottom
    thick black curve plots $C(z) = z+z\cramped{C(z)^2}$, or all
    coordinates $(z, C(z))$ usually considered for Boltzmann sampling; the
    shaded area is the region verifying $c \geqslant z+z\cramped{c^2}$,
    and from which we get coordinates $(z, c)$ which we use in our
    modified model.}
\end{figure*}

\section*{Introduction}

Being able to randomly generate large objects of any given combinatorial
class (for instance described by a grammar), is a fundamental problem with
countless applications in scientific modeling.

Nijenhuis and Wilf introduced the \emph{recursive method}~\cite{NiWi78} in
the late~70s (later extended by Flajolet~\etal~\cite{FlZiVa94}), the first
automatic random generation method; so termed \emph{automatic} because it
can directly derive random samplers from any combinatorial
description---no bijection, no clever algorithm, no complicated equations
are needed. The drawback is that this method is costly, notably in
preprocessing: to compute the probabilities involved in generating an
object of size $n$, the method requires knowing the complete enumeration
of the combinatorial class up to size $n$; and predictably when $n$ is
large, this enumeration is significant both to calculate and to store.

Enter \emph{Boltzmann sampling}, introduced by Duchon \emph{et al.} in
2002~\cite{DuFlLoSc02, DuFlLoSc04}, of which the key insight was that a
class' enumeration is not required to compute the correct branching
probabilities: instead, such probabilities can be obtained by evaluating
the counting generating functions---for an unlabelled combinatorial class
$\cls{C}$, for which there are $\cramped{c_n}$ elements of size $n$, its
counting generating function is defined as
\begin{align*}
  C(z) := \sum_{n=1}^\infty c_n z^n\text{.}
\end{align*}
Through evaluation, all the coefficients of a generating function are
smashed together, and the resulting probabilities take into account
objects of \emph{all sizes}. Thus, while you \emph{do} know that the
object returned will be \emph{uniformly} sampled among objects of the same
size, the size itself is a random variable---which you have no direct
control over. As a result, a significant aspect of Boltzmann sampling
involves: rejecting objects which are not within the desired size
interval; manipulating the generating functions so the size distribution
is such that not too many objects need be rejected.

The efficiency of this approach, combined with its mathematical
appeal---in many regards Boltzmann sampling is an elegant and natural
application of \emph{Analytic Combinatorics} pioneered by Flajolet and
Sedgewick~\cite{FlSe09}---have made it a fertile topic, and many of its
aspects have been developed through a broad number of papers.

\paragraph{The Boltzmann model} A Boltzmann sampler for an unlabelled
combinatorial class $\cls{C}$ (of which there are $\cramped{c_n}$ elements
of size $n$), is an algorithm that draws any given object
$\gamma\in\cls{C}$ with probability
\begin{align*}
  \prob[z]{\gamma} = \frac{z^{\card{\gamma}}}{C(z)}\qquad\text{with}\qquad
  C(z) := \sum_{n=1}^\infty c_n z^n\ =
  \sum_{\gamma\in\cls{C}} z^{\card{\gamma}}
\end{align*}
where $\card{\gamma}$ denotes the size of object $\gamma$ and $z$ is some
control parameter to be chosen. Thus the probability of obtaining an
object of size $n$ is
\begin{align*}
  \prob[z]{\card{\gamma}=n} = \frac{c_n z^n}{C(z)}\qquad\qquad
  \prob[s]{\gamma\ |\ \card{\gamma} = n} = \frac 1 {c_n}
\end{align*}
while the probability of drawing an object conditioned on its size is
uniform.

The name of the method is an analogy to the Boltzmann model of statistical
physics that assigns to each possible state of a system the probability
$\cramped{\smexp{-\beta E}/Z}$, where $E$ is the energy of the state,
$\beta=1/T$ is a constant, and $Z$ is normalizing constant---the original
authors noted that this was similar to the probability distribution of
objects. But in truth, the distribution of the sizes of objects is a very
generic distribution already known to probabilists as the \emph{Power
  Series Distribution}\footnote{The Poisson, geometric, log-series
  distributions are all special cases of this distributions, a fact which
  is put to use by Flajolet~\etal~\cite[\S 2]{FlPeSo11} who designed the
  Von Neumann/Flajolet scheme to simulate power series distributions using
  only random bits.}, and according to Johnson~\etal~\cite[\S
2.2]{JoKeKo05}, the terminology is usually credited to
Noack~\cite{Noack50} around 1950.

\paragraph{Evaluating GFs near their singularity}
Boltzmann samplers depend on the evaluation\footnote{A feature of
  Boltzmann samplers is that they generally require (see Otter trees in
  Section~\ref{sec:otter} for an exception) a constant number of such
  evaluations which can be charged a preprocessing; and this number of
  evaluation is dependent on the size of the combinatorial system, rather
  than on the size of the objects to be generated.} of generating function
in the neighborhood of their singularity\footnote{In keeping with the
  usage of analytic combinatorics, we call \emph{singularity}, the
  smallest positive point at which a generating function is not
  defined.}---and it was assumed that this could be done in constant time
arithmetic complexity.

The problem of how these functions should be evaluated was left open by
the original paper, and remained without answer until the contribution of
Pivoteau~\etal~\cite{PiSaSo08, PiSaSo12}. They introduced a variant of
Newton's iteration for combinatorial systems, which has a highly efficient
quadratic convergence (and what's more, is provably convergent for any
specifiable combinatorial class). However some aspects have remained open:

\begin{enumerate}[label=(\alph*)]
\item Solving the evaluation problem does not entirely solve the issue of
  tuning the samplers---that is, picking the value of $z$, which will
  yield the best concentration of objects of the targeted size. Currently,
  expected value tuning requires inverting a system of equations, and to
  our knowledge this is not routinely done for large combinatorial
  systems. For certain combinatorial classes (algebraic classes), singular
  samplers are tuned by approaching the singularity as close as possible
  using a binary search, requiring making a logarithmic number of calls to
  the oracle, as shown by Darrasse~\cite{Darrasse10}.

\item As a related issue, it seems legitimate to ask: is evaluating the
  generating functions truly necessary, since these evaluations are
  \emph{in fine} used to compute probabilities of average? and can
  relaxing this requirement possibly lead to more simple (or more
  efficient) implementations?

\item Finally, as a point of minor practical concern, but of conceptual
  interest: because of the finite nature of computers, and although the
  oracle can provide arbitrary precise values, the evaluation of
  generating functions is in practice restricted to fixed precision
  approximations. It has been argued that the incurred bias in uniformity
  is minimal: is it possible to make exact simulations from approximate
  values?
\end{enumerate}

\subsection*{Our contribution: an extended framework.}
This present paper attempts to investigate some of the aforementioned
questions. Our novel idea appeals to the classical random generation
concept of \emph{rejection}, as described for instance in Devroye's
chapter on the rejection method~\cite[\S 2]{Devroye86b}. Instead of
evaluating the generating functions exactly, we pick some nearby point
that is easier to compute.

This is illustrated in Figure~\ref{fig:bidon}. Both Boltzmann samplers and
our samplers use coordinates from the shaded region. But while Boltzmann
samplers limit themselves to the coordinates that belong to the thick
black curve at the bottom of the region, we allow ourselves to pick any
point within the region. Of course, this introduces a bias, which needs to
be compensated with some additional rejection. But we show that this
rejection is constant and that for reasonable choices of coordinates it is
practically negligible.

In this paper, we introduce this idea in Section~\ref{sec:main-concepts},
and showcase how it may be used through some illustrating examples: in
Section~\ref{sec:simple-trees}, we show our samplers can enable using
alternate techniques to determine the tuning parameter; in
Section~\ref{sec:otter}, we use the example of Otter trees (non plane
binary trees) to show how we can circumvent having to make a non-constant
number of evaluations of the generating function.

The ideas presented here followed from the first author's work to extend
Boltzmann samplers to infinite objects~\cite{BoMoTa12}, and the second
author's attempts to modify generating functions to shape the size
distribution of sampled objects.

\paragraph{Limits of this first version} This preliminary work comes with
a set of restrictions: because none of the authors are presently familiar
with the extensive litterature on multidimensional optimization, we have
avoided describing how to apply this idea to \emph{combinatorial systems}
(or multitype definitions), focusing instead on combinatorial classes
which can be described in a single equation\footnote{This was also
  the case of the original Boltzmann paper.}. This is not because the idea
of rejection cannot be trivially extended to systems, but rather because
we felt this strengthened our exposition, while the added complexity (in
notations, etc.) of describing systems could not be justified by our
current findings.

\section{Analytic Samplers\label{sec:main-concepts}}

In this section, we provide the main definitions for our analytic random
samplers, and then show the algorithms associated with the basic
constructions.

We have used the name \emph{Analytic Samplers} to distinguish our
contribution in this paper from Boltzmann samplers, but it should be noted
that the latter could legitimately be termed \emph{Analytic Samplers}.

\subsection{Main definitions}

\begin{Definition}
  Let $\cls{A}$ be an unlabelled combinatorial class, and $a_n$ the number
  of objects from $\cls{A}$ that have size $n$. The \emph{ordinary
    generating function} (OGF) associated with class $\cls{A}$ is defined
  equivalently by
  \begin{align*}
    A(z) := \sum_{n=0}^\infty a_n z^n\qquad\text{or}\qquad
    A(z) := \sum_{\alpha\in\cls{A}} z^{\card{\alpha}}\text{.}
  \end{align*}
\end{Definition}
The ordinary generating function enumerates combinatorial class it is
associated to. The tenet of the \emph{symbolic method}~\cite{FlSe09} is
that if a combinatorial class can be symbolically specified using a set of
operators (disjoint union, Cartesian product, sequence, multiset, etc.)
from initial terminal symbols called \emph{atoms} which have unit size,
then this specification can be directly translated to obtain the ordinary
generating function.

\begin{Definition}\label{def:phis}
  Let $\cls{A}$ be a symbolically defined combinatorial class which can be
  translated, following the symbolic method, to a functional
  equation on $A(z)$, the generating function associated with $\cls{A}$,
  \begin{align*}
    \cls{A} = \Phi(\clsAtom, \cls{A}, \mbox{\boldmath$\cls{X}$})
    \qquad\Rightarrow\qquad
    A(z) = \phi(z, A(z), {\mbox{\boldmath$X$}}(z))\text{,}
  \end{align*}
  where both $\Phi$ and $\phi$ may possibly involve other
  classes/generating functions which we note using vectors in bold (and
  each symbol/generating function component of the vector itself defined
  by their own equations).
\end{Definition}

\begin{Definition}
  Given a combinatorial class $\cls{A}$ as given in
  Definition~\ref{def:phis}, a pair of coordinates $(z,a)$ is said to be
  \emph{analytically valid coordinates} for the combinatorial class
  $\cls{A}$ if and only if they verify the inequality
  \begin{align*}
    a \geqslant \phi(z, a, \mbox{\boldmath$x$})\text{.}
  \end{align*}
\end{Definition}

\begin{remark}
  It is true that we could have some stronger bound, for instance, $a
  \geqslant A(z)$. But this is not desirable because the bound involves
  the generating function, the evaluation of which we are trying to avoid.
  
  Indeed, a subtle remark is that while computing the right-hand side of
  the inequality is not necessary to run the analytic samplers, it
  \emph{is} necessary to make the initial calibration. If this right-hand
  side depended on the generating function, we would be requiring strictly
  the same amount of work as traditional Boltzmann samplers---if not more!
\end{remark}

In general, for convenience and clarity, we will omit the vector
in the notations, and any additional bound symbols will be implicit.

\begin{Definition}\label{def:as}
  An \emph{analytic sampler} for an unlabelled combinatorial class
  $\cls{A}$ is an algorithm which samples an object $\alpha\in\cls{A}$, of
  size $\card{\alpha}$, with probability
  \begin{align*}
    \prob[(z,a)]{\alpha} = \frac {z^{\card{\alpha}}} {a}
  \end{align*}
  and fails with probability
  \begin{align*}
    \prob[(z,a)]{\ASDeath} = 1 - \frac{A(z)}{a}
  \end{align*}
  where $A(z)$ is the ordinary generating function associated with class
  $\cls{A}$, and the analytically valid coordinates $(z,a)$ are called the
  \emph{control parameter}. Moreover we denote by $\tASamp[z,a]{\cls{A}}$
  such an analytic sampler.
\end{Definition}

Because the original Boltzmann samplers already used the concept of
rejection to control the size of the output and constrain it to a
tolerance interval, we choose instead to call our additional rejection,
\emph{failure}, to avoid confusion.

\begin{theorem}
  Let $\cls{A}$ be a combinatorial class and $A(z)$ its generating
  function, and let $(z,a)$ be analytically valid coordinates for
  $\cls{A}$. The proportion of objects for which the generation has
  failed, does not depend on the size of the successfully generated
  object, and is equal to $1 - A(z)/a$.
\end{theorem}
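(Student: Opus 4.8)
The plan is to regard the analytic sampler of Definition~\ref{def:as} as a sequence of independent, identically distributed trials: each trial returns an object $\alpha\in\cls{A}$ with probability $z^{\card{\alpha}}/a$ and otherwise reports failure. The statement then splits into the \emph{value} of the failure probability and its \emph{independence} from the size of the eventually-returned object. For the value, I would sum the per-object success probabilities over the whole class,
\begin{align*}
  \sum_{\alpha\in\cls{A}}\frac{z^{\card{\alpha}}}{a}
  = \frac1a\sum_{\alpha\in\cls{A}}z^{\card{\alpha}}
  = \frac{A(z)}{a}\text{,}
\end{align*}
the last equality being the definition of the OGF. Since, within one trial, returning some object and failing are complementary events, the failure probability is $1-A(z)/a$. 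That this number lies in $[0,1)$ is exactly where analytic validity enters: one checks that $a\geqslant\phi(z,a)$ forces $a\geqslant A(z)$, via the standard fixed-point argument that $A(z)$ is the least fixed point of the monotone map $t\mapsto\phi(z,t)$, so that any $a$ with $a\geqslant\phi(z,a)$ dominates it.

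For the independence claim I would introduce the number $F$ of failed trials preceding the first success, together with the size $N=\card{\alpha}$ of the object then returned. Because the trials are independent, the joint law factors through the single-trial probability $a_n z^n/a$ of succeeding with an object of size $n$:
\begin{align*}
  \prob[(z,a)]{F=k,\ N=n}
  = \left(1-\frac{A(z)}{a}\right)^{\!k}\frac{a_n z^n}{a}\text{.}
\end{align*}
Summing over $n$ recovers the geometric marginal of $F$ with parameter $1-A(z)/a$, and summing over $k$ recovers the Boltzmann marginal $\prob{N=n}=a_n z^n/A(z)$; since the joint is exactly the product of these two marginals, $F$ and $N$ are independent. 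In particular the failure rate is $1-A(z)/a$ whatever the size of the successfully generated object, which is what the theorem asserts.

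The underlying computation is elementary, so the real care lies elsewhere. First, one must fix the interpretation of ``proportion of objects for which the generation has failed'': reading it as the independence of the failure count $F$ from the output size $N$ is the sensible choice, since within a single trial success and failure are mutually exclusive and a naive conditional probability would be degenerate. Second, the passage from analytic validity $a\geqslant\phi(z,a)$ to the inequality $a\geqslant A(z)$---needed for $1-A(z)/a$ to be a genuine probability---should be spelled out through the monotone fixed-point step rather than taken for granted. Neither point is difficult, but both deserve to be made explicit.
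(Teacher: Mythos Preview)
Your argument is correct and rests on the same observation as the paper's proof: the sampler is a sequence of i.i.d.\ trials, so the failure count before the first success is geometric with parameter $1-A(z)/a$. The paper then computes $\expect[(z,a)]{\#\ASDeath}=a/A(z)-1$ and takes the ratio $\expect[(z,a)]{\#\ASDeath}/(\expect[(z,a)]{\#\ASDeath}+1)$ to recover the stated proportion, whereas you go straight to the joint law of $(F,N)$ and factor it; your route is slightly more direct and makes the ``does not depend on the size'' clause explicit rather than leaving it implicit in the word ``constant''. You also spell out two points the paper does not: that the per-trial success probability sums to $A(z)/a$ (checking the two halves of Definition~\ref{def:as} are consistent), and that analytic validity $a\geqslant\phi(z,a)$ forces $a\geqslant A(z)$ via the least-fixed-point property of $A(z)$, so that $1-A(z)/a\in[0,1)$. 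Neither addition changes the substance, but both are welcome clarifications.
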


\begin{proof}
  This follows from the definition of the model of Analytic Samplers
  wherein the probability of a single draw failing is constant---in the
  sense that it does not depend on the size of the object that was being
  constructed when the sampling failed---and equal to
  \begin{align*}
    \prob[(z,a)]{\ASDeath} = 1 - \frac{A(z)}{a}\text{,}
  \end{align*}
  and thus an object (of some random size) is drawn with the complementary
  probability. The number of failures before an actual object is drawn is
  then geometrically distributed with $p = 1 - A(z)/a$. We then have:
  \begin{align*}
    \expect[(z,a)]{\#\ASDeath}
      &= \sum_{k=0}^\infty k \left(1 - \frac{A(z)}{a}\right)^k \frac{A(z)}{a}\\
      &= \frac{A(z)}{a}
      \frac{\left(1 - \frac{A(z)}{a}\right)}{\left(\frac{A(z)}{a}\right)^2}\\
      &= \frac {a}{A(z)} \left(1 - \frac{A(z)}{a}\right) = \frac{a}{A(z)} - 1
  \end{align*}
  and because there is one last object generated (the one that does not
  fail, and after which we are done) the expected proportion of objects
  which have failed is
  $\expect[(z,a)]{\#\ASDeath}/(\expect[(z,a)]{\#\ASDeath} +1)$ as stated.
\end{proof}

Note that the lower bound for $a$ is $a = A(z)$, and for this choice of
value, the analytic sampler does not fail: the proportion of failed
objects is 0\%, and we revert to the case of Boltzmann samplers.

Indeed the inequality can naturally be seen as an equality involving a
slack variable $\delta$, $a = \phi(z, a, \mbox{\boldmath$x$}) + \delta$.
In essence, if you are willing to spend the computational time needed to
compute the generating function, then you are rewarded for your efforts by
having no rejection at all.

\subsection{Elementary constructions.}

In this subsection, we give the basic constructions used by our analytic
samplers. We follow the notation of the original
article~\cite{DuFlLoSc04}, and extend it to include our failure
probability,
\begin{align*}
  \Gamma\cls A : [p_1] \cdot \BernoulliLaw{p_2} \Rightarrow X\ |\ Y
\end{align*}
means that we first fail with probability $1-p_1$, then we draw a
Bernoulli variable $U$ of parameter $p_2$: if it is equal to $U=1$ then we
return $X$, otherwise we return $Y$. Furthermore, when instead of a
Bernoulli distribution, we have some variable $K$ drawn according to a
discrete distrbution, we mean that we return a tuple of $K$ independent
calls to the specified sampler.

Let $\cls{A}$, $\cls{B}$ and $\cls{C}$ be combinatorial classes. We
recall, for clarity, that in this article we note $\prob{\alpha}$ the
probability of drawing an object $\alpha$; when we want to make explicit
from which class this object is drawn, we note $\prob{\alpha\in\cls{A}}$.

\paragraph{Disjoint union} Let $\cls{A} = \cls{B} + \cls{C}$, and $a
\geqslant b + c$. We first reject $1-(b+c)/a$ of the objects, then we do a
normal Boltzmann.
\begin{align*}
  \Gamma\cls A: \left[\frac{b+c}{a}\right]\cdot
  \BernoulliLaw{\frac{b}{b+c}} \Rightarrow
  \Gamma\cls B \ \ |\ \ \Gamma\cls C
\end{align*}
\begin{proof}
  We must show that the sampler $\Gamma\cls{A}$ returns objects
  $\alpha\in\cls{A}$ with the correct probability $\prob[(z,a)]{\alpha} =
  z^{|\alpha|}/a$ (that is the probability of drawing an object from
  $\cls{A}$ follows the law given in Definition~\ref{def:as}), assuming
  inductively that the generators $\Gamma\cls{B}$ and $\Gamma\cls{C}$ are
  correct.

  Hence:
  \begin{align*}
    \begin{split}
      \prob[(z,a)]{\alpha\in\cls{A}} = \frac{b+c}{a}
      &\left(\frac{b}{b+c}\prob[(z,b)]{\alpha\in\cls{B}}\right.\\
        &\left. +\, \frac{c}{b+c}\prob[(z,c)]{\alpha\in\cls{C}}\right)
    \end{split}
  \end{align*}
  that is, the probability of sampling an object from $\cls{A}$ is the
  probability of first not failing, $(b+c)/a$, and then the probability of
  drawing the object using the sampler for class $\cls{B}$ or $\cls{C}$
  with the correct Bernoulli probability. By hypothesis those two samplers
  return objects with correct probabilities, so
  \begin{align*}
    \prob[(z,a)]{\alpha\in\cls{A}} &= \frac{b+c}{a}
    \left(\frac{b}{b+c}\frac{z^{|\alpha|}}{b}+
      \frac{c}{b+c}\frac{z^{|\alpha|}}{c}\right) =
    \frac{z^{|\alpha|}}{a}\text{.}
  \end{align*}
  Note that in this proof, and the following, we do not explicitly prove
  the probability of failure as it is a straightforward consequence: sum
  the probability of drawing an object over all possible possible objects,
  and take the complimentary probability.
\end{proof}

\paragraph{Cartesian product} Let $\cls{A} = \cls{B} \times \cls{C}$, and
$a \geqslant b \cdot c$.
\begin{align*}
  \Gamma\cls A: \left[\frac{b\cdot c}{a}\right]
   \Rightarrow
  \left(\Gamma\cls B \ \ ;\ \ \Gamma\cls C\right)
\end{align*}
\begin{proof}
  The proof follows the same model as the previous construction; let
  $\alpha=(\beta, \gamma)$,
  \begin{align*}
    \prob[(z,a)]{\alpha\in\cls{A}}&= \frac{b\cdot
      c}{a}\prob[(z,b)]{\beta\in\cls{B}}
    \prob[(z,c)]{\gamma\in\cls{C}}
  \end{align*}
  and since the samplers for $\Gamma\cls{B}$ and $\Gamma\cls{C}$ are
  inductively assumed to be correct,
  \begin{align*}
    \prob[(z,a)]{\alpha\in\cls{A}}&= \frac{b\cdot
      c}{a}\frac{z^{|\beta|}}{b}\frac{z^{|\gamma|}}{c}=
    \frac{z^{|\beta|+|\gamma|}}{a}=\frac{z^{|\alpha|}}{a}\text{.}
  \end{align*}
\end{proof}

\begin{example}
  At this point we will illustrate the initial definitions and these
  constructors by looking at the class $\cls{B}$ of binary trees, in which
  all nodes both internal and external count towards the size of the tree.
  These trees can be symbolically specified as either a leaf ($\clsAtom$)
  or a node which has two subtrees ($\clsAtom\times\cramped{\cls{B}^2}$),
  \begin{align}\label{eq:bintrees}
    \cls{B} = \clsAtom + \clsAtom\times\cls{B}^2\quad\Rightarrow\quad
    B(z)= z + zB(z)^2\text{.}
  \end{align}
  This functional equation can then be translated to an inequality,
  \begin{align*}
    b \geqslant z + zb^2\text{.}
  \end{align*}
  The analytically valid coordinates for $\cls{B}$ are all points that
  belong to the shaded region in Figure~\ref{fig:bidon}. The corresponding
  analytic sampler is:
  \begin{align*}
    \Gamma\cls B: \left[\frac{z+zb^2}{b}\right]\cdot
    \BernoulliLaw{\frac{z}{z+zb^2}} \Rightarrow
    \blacksquare \ \ |\ \ \left(\Gamma\cls B \ \ ;\ \ \Gamma\cls B\right)
  \end{align*}
  where $\blacksquare$ is a leaf of unit weight.
\end{example}

\paragraph{Sequence} Let $\cls{A} = \Seq{\cls B}$ and $a \geqslant
1/(1-b)$.
\begin{align*}
  \Gamma\cls A: \left[\frac{(1-b)^{-1}}{a}\right]\cdot
  \GeometricLaw{b} \Rightarrow
  \left(\Gamma\cls B,\ldots\right)
\end{align*}  

\begin{proof}
  We follow again the same model as previously. Let $\alpha=(\beta_1,
  \ldots,\beta_k)$.
  \begin{align*}
    \prob[(z,a)]{\alpha\in\cls{A}} =
    \frac{(1-b)^{-1}}{a}\prob{\GeometricLaw{b} = k}
    \prod_{i=1}^k \prob[(z,b)]{\beta_i\in\cls{B}}
  \end{align*}
  and by hypothesis
    \begin{align*}
      \prob[(z,a)]{\alpha\in\cls{A}} &=
      \frac{(1-b)^{-1}}{a}{b}^k(1-b)
      \prod_{i=1}^k \frac{z^{|\beta_i|}}{b}\\
      &= \frac{(1-b)^{-1}}{a}{b}^k(1-b)
      \frac{z^{|\beta_1|+\ldots+|\beta_k|}}{{b}^k}\\
      &= \frac{z^{|\beta_1|+\ldots+|\beta_k|}}{a} = \frac{z^{|\alpha|}}{a}\text{.}
  \end{align*}
\end{proof}

\subsection{Illustration of the rate of failure with Cayley Trees.}

For the purpose of giving an example that is somewhat more interesting
than binary trees, we go slightly beyond the scope of unlabelled
constructions which we have presented thus far, and present a
\emph{labelled} class which uses the \Set operator. The point of this
subsection is to illustrate, with an example, how moving away from the
curve of a generating function impacts the rate of failure---that is, the
rejection which must be done to compensate for sampling bias introduced by
the approximation.

Consider the example given by the class $\cls{T}$ of Cayley trees
(labelled, unrestricted, non-plane trees), symbolically specified as
$\cls{T} = \clsAtom \star \Set{\cls{T}}$. The exponential generating
function of this class, $T(z) = z\cramped{\smexp{T(z)}}$, is closely
related to Lambert's $W$-function, which is implicitly defined. Actual
standalone\footnote{By \emph{standalone}, we are referring to Boltzmann
  samplers not implemented within a computer algebra system, such as Maple
  or Mathematica, which usually provide computational access to such
  functions.} implementations of Boltzmann samplers requiring this
function have, for instance, have resorted to using its truncated Taylor
series expansion, see Bassino~\etal~\cite{BaDaNi07}.

\begin{figure*}
  \centering
  \raisebox{-0.5\height}{{
      \includegraphics[scale=0.7]{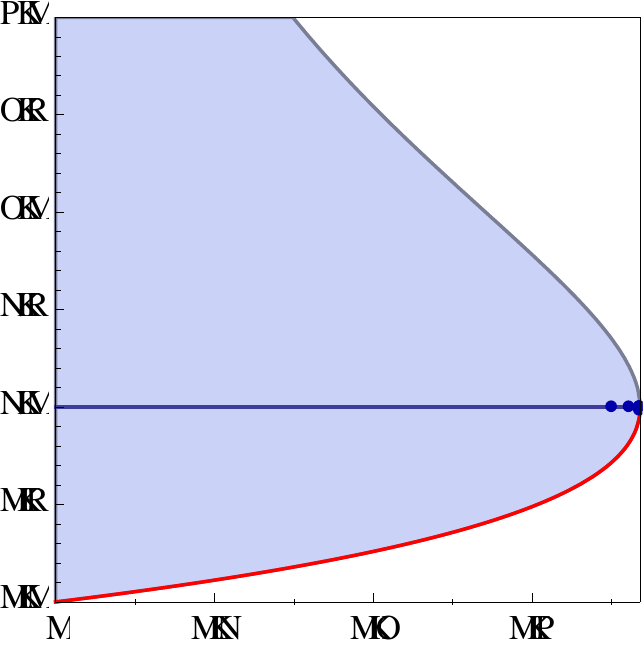}}}\qquad
  \raisebox{-0.5\height}{{
      \includegraphics[scale=1.0]{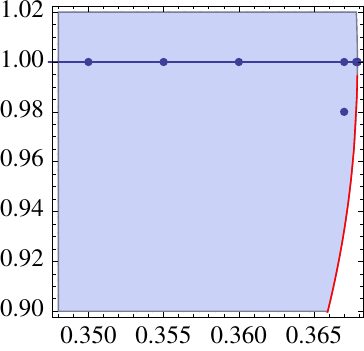}}}
  \caption{\label{fig:cayley-points}This is a plot of the region defined by
    the inequality $t \geqslant z\cdot\cramped{\lgexp{t}}$, with $z$ on
    the $x$-axis and $t$ on the $y$-axis. The lower bound of the region,
    in bolded-red, is the curve of the exponential generating function
    $T(z)$. Each point corresponds to one of the columns of
    Table~\ref{tbl:rejection}. The second figure, on the right, is a
    close-up near the singularity, at $z=\cramped{\smexp{-1}}$.}
\end{figure*}

With analytic samplers, our starting point is the system of functional
equations yielded by the symbolic method (here there is only a single
equation), replace any occurrence of a function by a free variable, and
obtain the inequality $t \geqslant z\cdot\cramped{\lgexp{t}}$ and the
algorithm
\begin{align*}
  \Gamma\cls{T}(z,t): \left[\frac{z\lgexp{t}}{t}\right]
  \cdot \PoissonLaw{t} \Rightarrow
  \square(\Gamma\cls{T}(z,t),\ldots,\Gamma\cls{T}(z,t))
\end{align*}
in other words, after an initial rejection (what we call failure) with
probability $z\cdot\cramped{\lgexp{t}}/t$ to account for the approximation
of the generating function, we draw a Poisson random variable of parameter
$t$, to indicate how many children to generate.

It is straightforward enough to see that this algorithm is correct for any
pair $(z,t)$ which satisfies the aforementioned inequality. Notice also
the failure ratio can easily be simulated exactly using techniques
described by Flajolet~\etal~\cite{FlPeSo11}.

An experiment summarized in Table~\ref{tbl:rejection} illustrates that the
impact of approximation is modest. For various pairs of $(z,t)$, the table
summarizes the result of making 1000 calls to the sampler: it indicates in
what proportion the sampling failed prematurely; and makes note of the
average and maximal size among the trees actually drawn. The case where
$z=\cramped{\smexp{-1}}$ and $t=1 = T(z)$ is special: first because this
is the only case in which $t$ is exactly equal to the evaluated EGF (thus
we have $0\%$ failure and our analytic sampler is a traditional Boltzmann
sampler); second because since we are evaluating the EGF in its
singularity, this is actual a \emph{singular Boltzmann sampler} (for which
the expected value of the size of the output in unbounded). All other
points, as illustrated in Figure~\ref{fig:cayley-points} are more or less
distant to the plot of $T(z)$, with a consequently higher failure rate:
but even at relatively significant distance from the curve, the failure
rate remains largely tolerable.

\begin{table*}
  \begin{bigcenter}
  \begin{tabular}{lcccccccc}
    \toprule
    \multicolumn{1}{l}{}&\multicolumn{7}{c}{$t=1$}&\multicolumn{1}{c}{$t=0.98$}\\
    \multicolumn{1}{r}{$z\;\;=$}& 0.35 & 0.36 & 0.367 & 0.3678 & 0.36787 & 0.367879 &$\smexp{-1}$ & 0.367\\
    \midrule
    failure (observed)& 28.8\% & 19.2\% & 6.4\% & 1.7\% & 0.4\% & 0.3\% &0\% & 3.5\%\\
    failure (theoretical)& 28.3\% & 19.4\% & 6.8\% & 2.1\% & 0.7\% & 0.2\% & 0\% & 3.9\%\\

    average size& 6.6 & 9.9 & 28.8 & 127. & 177.3 & 2716.7 &4944.3 & 35.9\\
    maximal size& 235 & 131 & 1493 & 17\,799 & 26\,531 & 826\,167 &2\,518\,975 & 1563\\
    \bottomrule
  \end{tabular}
  \end{bigcenter}
  \caption{\label{tbl:rejection}This table summarizes the result of making
    1000 calls to an analytic sampler for Cayley trees, with various
    values of $z$ (the control parameter) and $t$ (the approximation of
    the generating function). The \emph{failure} is the ratio of trees that
    must be rejected as a direct result of approximating the generating
    function, instead of evaluating it. Thus for the pair of values
    $z=\cramped{\smexp{-1}}$ and $t=1=T(z)$, in which we use the exact
    value of $T(z)$, our samplers are exactly equivalent to Boltzmann
    samplers, hence the failure is of $0\%$. What is remarkable is that the
    failure resulting from rather large approximations remains manageable.}
\end{table*}

\section{Simply generated trees\label{sec:simple-trees}}

We now would like to illustrate how dealing with a region (and inequality)
might make searching for an optimal pair of values for the sampler easier.
It this section, we show how we can search for the best value of the
tuning parameter $z$ (which happens to be in the vicinity of the
singularity) for a family of combinatorial classes, without evaluating the
generating function a logarithmic number of times.

Simply generated trees were introduced by Meir and Moon~\cite{MeMo78} as
classes of trees defined by the following specification
\begin{align}\label{eq:simpletrees}
  \cls{Y} = \clsAtom\times\Phi(\cls{Y})
\end{align}
where $\Phi$ is a polynomial defined as
\begin{align}\label{eq:simpletrees-phi}
  \Phi(x) = \sum_{\omega\in\Omega} x^\omega \qquad
  \Phi(x) = \sum_{\omega\in\Omega} \frac{x^{\left|\omega\right|}}{{\left|\omega\right|}!}
\end{align}
respectively depending on whether the class is unlabelled or labelled, and
where $\Omega \subseteq \N$ is the multiset of allowable degrees (for
instance, for binary trees, $\Omega = \set{0, 2}$). Meir and Moon
identified that trees families defined in such a way shared an important
number of common properties (such as mean path length of order $n\sqrt{n}$
or average height of order $\sqrt{n}$).

\subsection{Existing approaches.}

Randomly sampling from this class of tree is no longer particularly
challenging: there are several methods to do this, with various properties
of optimality (time, random-bit, etc.). So we do not presume to introduce
samplers with any sort of new efficiency. However the example of simply
generated trees illustrates a way in which calibration might be more
practical with analytic samplers.

\begin{figure*}
  \centering
  \includegraphics[scale=0.4]{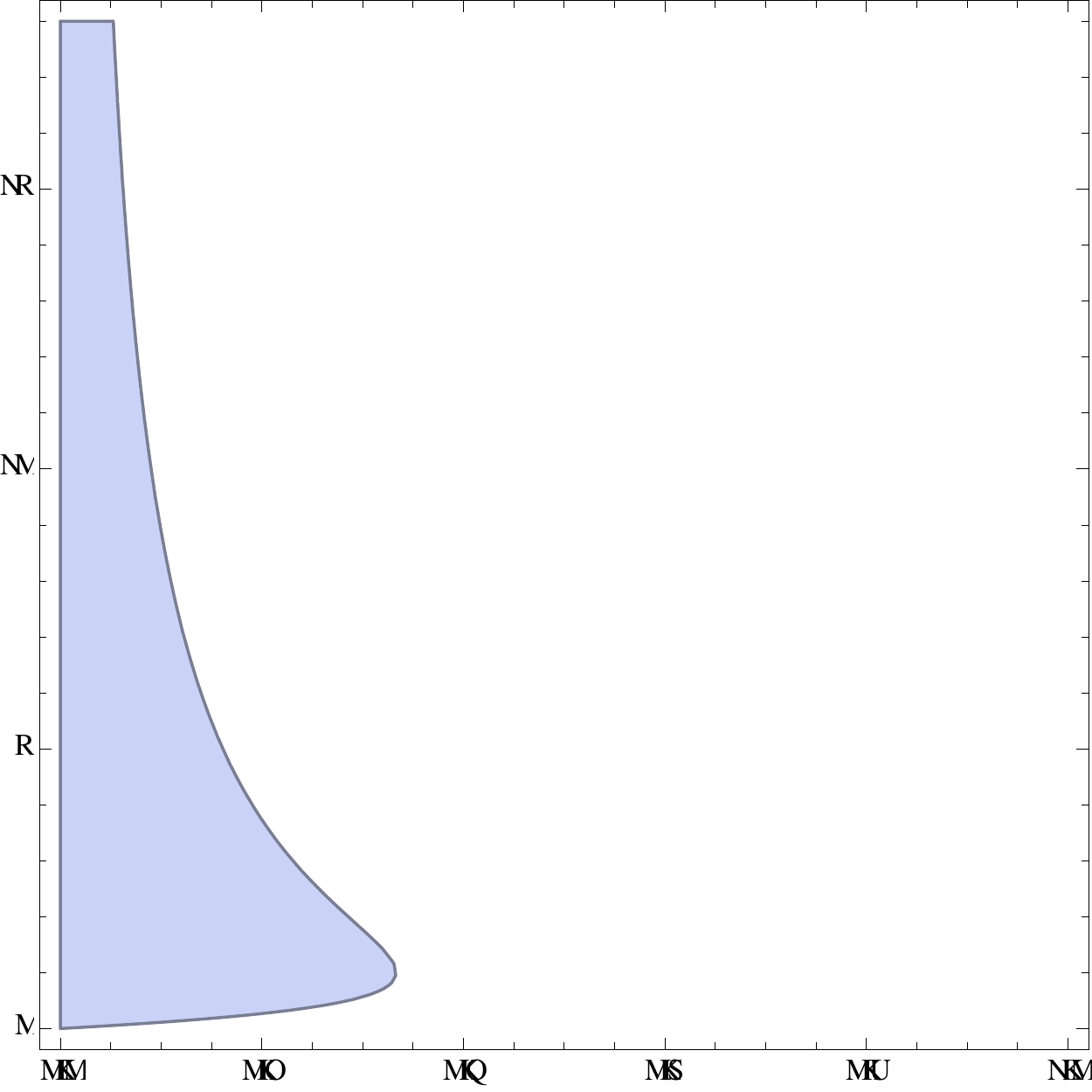}
  \includegraphics[scale=0.4]{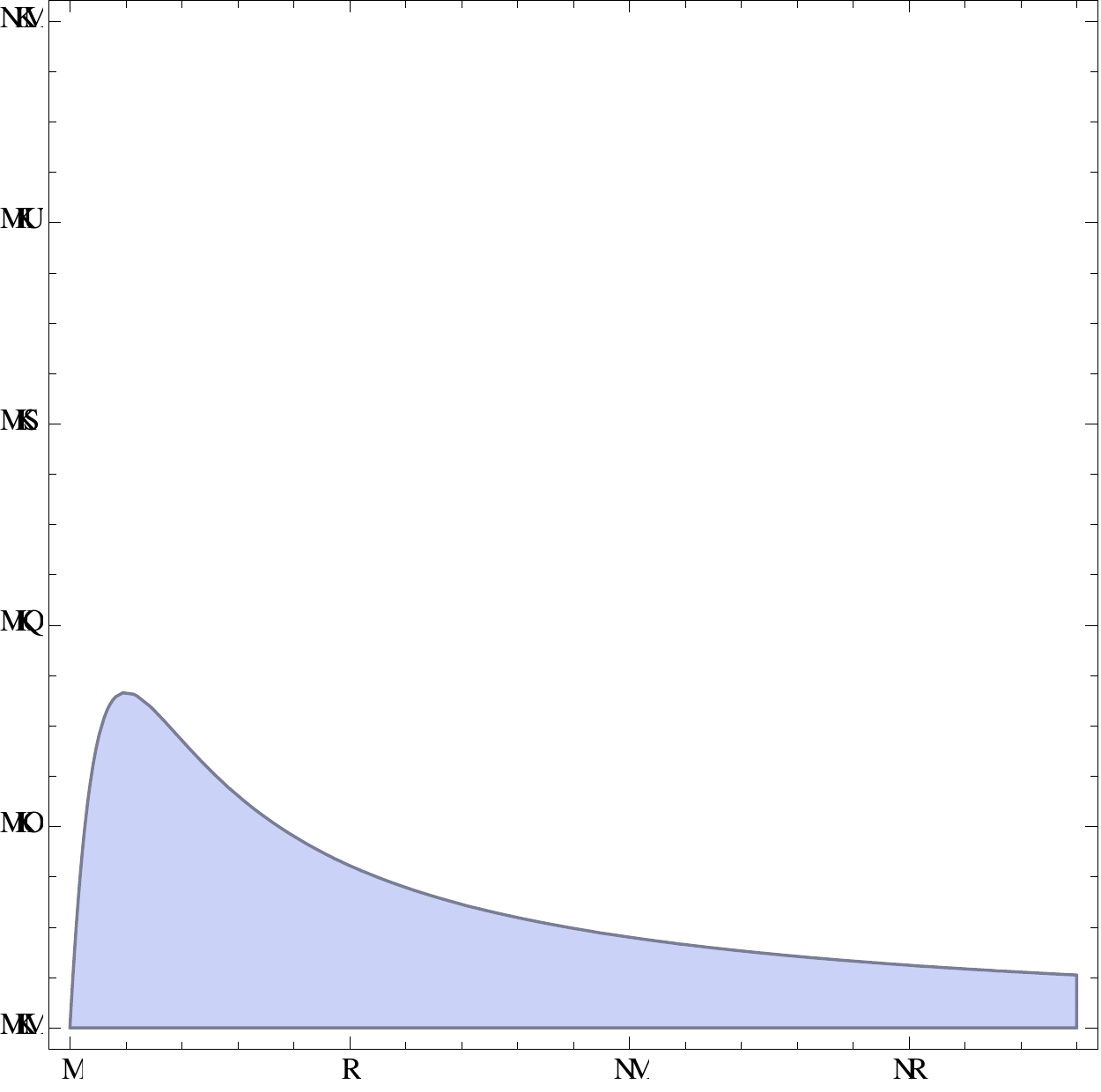}
  \caption{\label{fig:simple-trees}This is the plot of a the generating
    function of a simply generated tree (in this case, the class $\cls{U}$
    of unary-binary trees, i.e., $\Omega = \set{0, 1, 2}$). On the left,
    the $x$-axis is $z$, the control parameter and the $y$-axis has $U(z)
    = z\Phi(U(z))$. On the right, we are plotting the function $u\mapsto
    u/\Phi(u)$. The problem of looking for the singularity, left, has been
    reduced to the more palatable problem of maximizing a function,
    right.}
\end{figure*}

Simply generated trees happen to have a branching singularity. This
means: that their generating function can be evaluated at the singularity,
and also that the size distribution of objects produced by a Boltzmann
sampler would be `peaked', that is, highly concentrated towards smaller
objects. The solution has traditionally been to do singular sampling: to
pick $z$ as being at, or near, the singularity, generate objects with
unbounded expected size, and reject those that are too big.

Except in simple cases (such as binary trees, for which the singularity is
well known to be $1/4$), the singularity is \emph{not} known, so it must
be determined empirically. This is usually done with a binary search, as
implemented by Darrasse~\cite{Darrasse10}: the oracle introduced by
Pivoteau~\etal~\cite{PiSaSo12} converges when inside the radius of
convergence, and diverges otherwise; thus it is possible to detect whether
we have gone beyond the singularity. This method requires a logarithmic
number of calls to the oracle---a logarithmic number of evaluations that
are not done in constant time.

\subsection{New approach: maximize a polynomial.}

From the specification in Equation~\ref{eq:simpletrees}, we obtain the
condition for analytic-validity of a pair $(z,y)$,
\begin{align*}
  y \geqslant z\cdot\Phi(y)\text{.}
\end{align*}
With this, it is now easier, instead of looking at the generating function
$Y(z)$, to look instead at $y \mapsto y/\Phi(y)$, which is a rational
function. This function admits a maximal point in the unit interval, which
is the singular point of $Y(z)$.

Looking for this maximal point is a considerably easier problem, that does
not require any evaluation of the generating function (except perhaps for
an initial guess): it can be solved by differentiation, by Newton
iteration, or with specifically optimized algorithms available in the
litterature, such as Brent's algorithm~\cite{Brent73}.

\section{Substitution operator\label{sec:otter}}

While we have only described, for space and pertinence purposes, how to
build analytic samplers for classes using elementary constructors, the
possibilities are much broader. In particular, functional operators, such
as pointing (differentiation) or substituting (composition) can naturally
be used.

We will not go into detail, but instead provide the example of the
unordered pair, $\cramped{\MSet[2]}$, and present an application with the
random sampling of Otter trees.

\subsection{Unordered pair construction}

Let $\cls{B}$ be a combinatorial class, and $\cls{A}=\MSet[2]{\cls{B}}$ be
the class containing unordered pairs of elements of $\cls{B}$. The
corresponding generating functions $A(z)$ and $B(z)$ verify the functional
equation
\begin{align*}
  A(z) = \frac{B(z)^2 + B(\cramped{z^2})}{2}\text{.}
\end{align*}
Assuming there is an analytic sampler for $\cls{B}$, we can build an
analytic sampler for $\cls{A}$. Let $(z,b)$ and $(\cramped{z^2},\bar{b})$
both be analytically valid for $\cls{B}$ (note that the variable $z$ must
be the same in both pairs), and let $(z,a)$ be analytically valid for
$\cls{A}$, that is
\begin{align*}
  a \geqslant \frac{b^2+\bar{b}}{2}\text{.}
\end{align*}
Using the notation we have introduced,
\begin{align*}
\begin{split}
  \tASamp[z,a]{\cls{A}} : \left[\frac{b^2 + \bar{b}}{2a}\right] \cdot\ &
  \BernoulliLaw{\frac{b^2}{b^2+\bar{b}}} \Rightarrow\\
  &\left\{\tASamp[z,b]{\cls{B}}\ ;\ \tASamp[z,b]{\cls{B}}\right\}\ \ |\\
  &\tASamp[z^2,\bar{b}]{\cls{B}}\text{ and duplicate}\text{.}
\end{split}
\end{align*}
In other terms, after making the obligatory failure test, we choose with
the proper probability whether to create a pair of elements resulting from
independent calls to $\tASamp[z,b]{\cls{B}}$, or whether to make one call
to $\tASamp[\cramped{z^2},\bar{b}]{\cls{B}}$ and duplicating the resulting
object to make a pair of identical objects.

\begin{proof}
  As before, proving the validity of this algorithm involves showing that
  the analytic sampler $\tASamp[z,a]{\cls{A}}$ returns any object
  $\alpha\in\cls{A}$ with probability $\cramped{z^{\card{\alpha}}}/a$. We
  distinguish two disjoint cases.
  \begin{itemize}
  \item Either the pair $\alpha=\set{\cramped{\beta_1};
      \cramped{\beta_2}}$ contains two distinct elements,
    $\cramped{\beta_1}\not=\cramped{\beta_2}$. Then this pair could only
    have been produced by two independent (and distinguished) calls to
    $\tASamp[z,b]{\cls{B}}$. Thus under this setting,
    \begin{align*}
      \begin{split}
        \prob[z,a]{\alpha\ |\ \beta_1 \not=\beta_2} = \frac{b^2+\bar{b}}{2a} &\cdot
        \frac{b^2}{b^2+\bar{b}} \cdot\\
        &\left(\prob[z,b]{\beta_1}\cdot\prob[z,b]{\beta_2} \right.\\
        &\left.+\, \prob[z,b]{\beta_2}\cdot\prob[z,b]{\beta_1}\right)\text{.}
      \end{split}
    \end{align*}
    By hypothesis, $\tASamp[z,b]{\cls{B}}$ is an analytic sampler for class
    $\cls{B}$, which means it returns an object $\beta\in\cls{B}$ with
    probability $\cramped{z^{\card{\beta}}}/b$,
    \begin{align*}
        \prob[z,a]{\alpha\ |\ \beta_1 \not=\beta_2} &= \frac{b^2+\bar{b}}{2a} \cdot
        \frac{b^2}{b^2+\bar{b}}
        \frac{2z^{\card{\beta_1}+\card{\beta_2}}}{b^2} \\
        &= \frac{z^{\card{\beta_1}+\card{\beta_2}}}{a} = \frac{z^{\card{\alpha}}}{a}\text{.}
    \end{align*}
  \item Or the pair $\alpha = \set{\beta; \beta}$ contains two identical
    objects. The pair could then have been drawn by either branch: from
    two independent calls to $\tASamp[z,b]{\cls{B}}$ which happen to
    return the same object; or from the call to
    $\tASamp[z^2,\bar{b}]{\cls{B}}$ which is duplicated. In this case,
    \begin{align*}
      \begin{split}
        \prob[z,a]{\alpha\ |\ \beta =\beta} = \frac{b^2+\bar{b}}{2a} \cdot &\left(
          \frac{b^2}{b^2+\bar{b}}\cdot \prob[z,b]{\beta}^2 \right.\\
          &\left.+\,\frac{b^2}{b^2+\bar{b}}\cdot
          \prob[\cramped{z^2},\bar{b}]{\beta}\right)\text{.}
      \end{split}
    \end{align*}
    Assuming the analytic sampler for $\cls{B}$ is correct,
    \begin{align*}
      \begin{split}
        \prob[z,a]{\alpha\ |\ \beta =\beta} &= \frac{b^2+\bar{b}}{2a} \cdot
        \left(
          \frac{b^2}{b^2+\bar{b}}\cdot
          \left(\frac{z^{\card{\beta}}}{b}\right)^2\right.\\
        &\phantom{= \frac{b^2+\bar{b}}{2a} \cdot\quad}\left. +\,
          \frac{b^2}{b^2+\bar{b}}\cdot
          \frac{(\cramped{z^2})^{\card{\beta}}}{\bar{b}}\right)
      \end{split}
    \end{align*}
    which finally yields
    \begin{align*}
    \prob[z,a]{\alpha\ |\ \beta =\beta} &= \frac{z^{2\card{\beta}}}{a}
    = \frac{z^{\card{\alpha}}}{a}\text{.}
    \end{align*}
  \end{itemize}
\end{proof}


\subsection{Otter Trees}

We've already thoroughly discussed the class $\cls{B}$ of binary trees.
These binary trees are \emph{plane}, in the sense that there the children
of an internal node are distinguished: there is a left node and a right
node. We now consider the class $\cls{V}$ of \emph{Otter tree}, which are
binary trees that are \emph{non plane}, using the $\cramped{\MSet[2]}$
operator introduced in the previous subsection,
\begin{align*}
  \cls{V} = \clsAtom + \MSet[2]{\cls{V}}\text{.}
\end{align*}
The generating function $V(z)$ for Otter trees satisfies the functional
equation
\begin{align*}
  V(z) = z + \frac{V(z)^2 + V(\cramped{z^2})}{2}\text{,}
\end{align*}
and note that, for this class, we only count external nodes. This
combinatorial class does not have a closed form generating function: prior
Boltzmann samplers for Otter trees have already informally used
approximations~\cite[\S 5]{FlFuPi07}; Pivoteau~\cite{Pivoteau09} used the
fact that $V(z) = 1 - \sqrt{1-2z-V(\cramped{z^2})}$. In practice these
approximations yield correct simulations, but theoretically they could
introduce a bias. With analytic samplers, this possible bias is corrected
by failing with some probability; this also gives us more flexibility to
choose the approximations.

\begin{figure*}[b]
  \begin{bigcenter}
  \raisebox{-0.5\height}{{
      \includegraphics[scale=0.3]{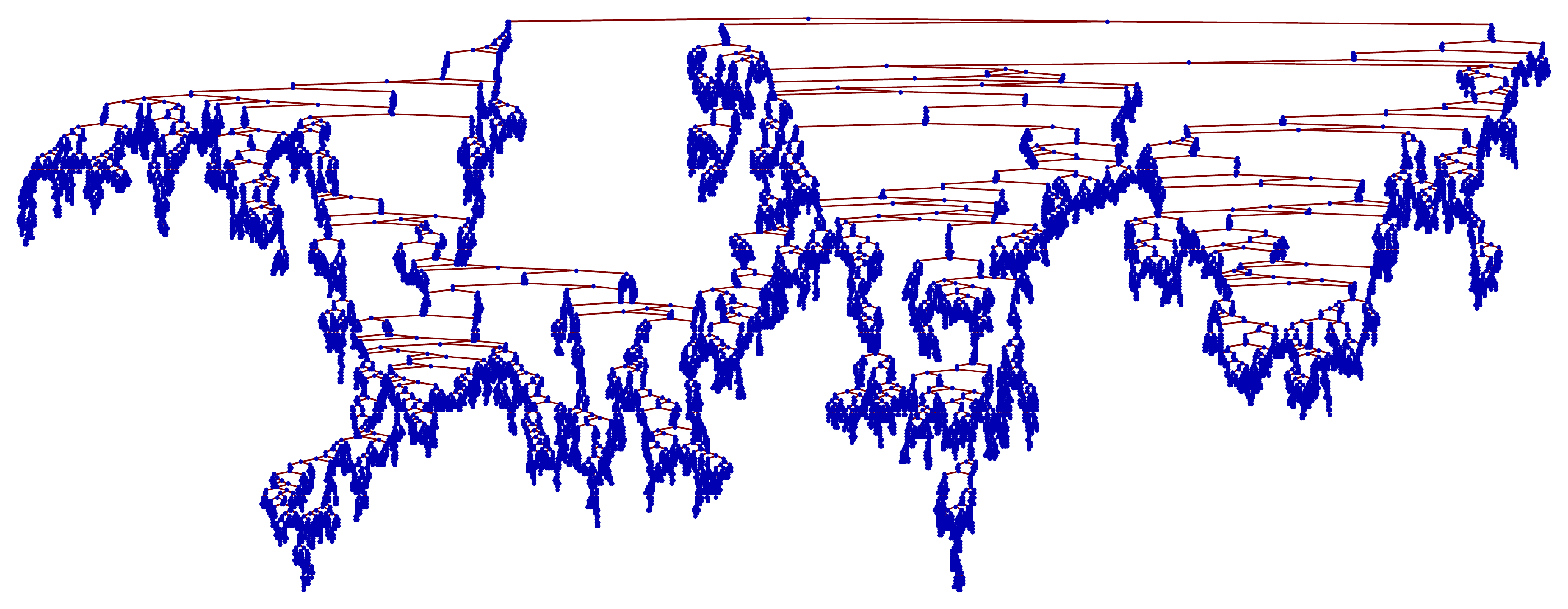}}}\qquad%
  \raisebox{-0.5\height}{{
      \includegraphics[scale=0.4]{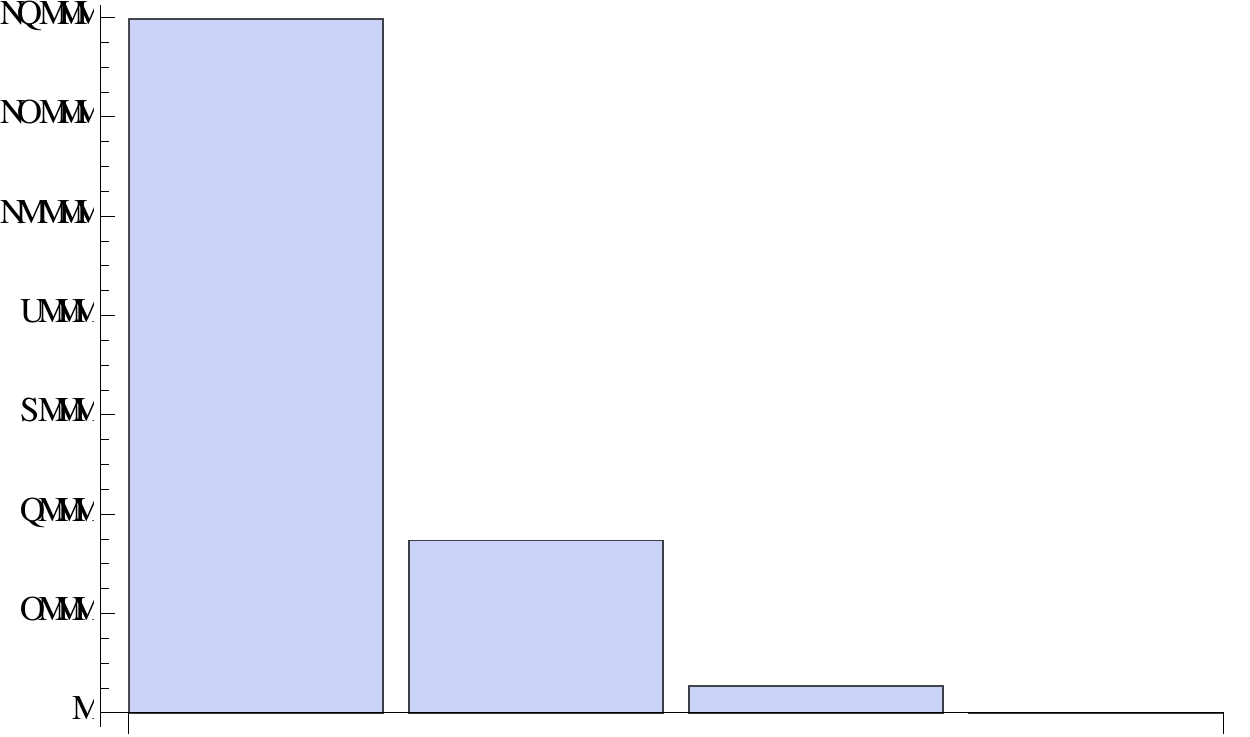}}}
  \end{bigcenter}
  \caption{\label{fig:otter-17979}An Otter tree of size $n=17\,979$ (we
    were targeting $20\,000\pm 10\%$), generated in $13s$ on a standard
    2012 laptop computer. The bar chart summarizes the degree of symmetry
    of the leaves: the first bar indicates how many leaves are not
    duplicated; then duplicated once; then four times; then eight times.}
\end{figure*}

\paragraph{Setting up the inequality} For our analytical samplers, we need
the values $\cramped{v_{[i]}}$, corresponding to $V(\cramped{z^{2^i}})$,
which are defined recursively by the system of inequalities
\begin{align}
  \forall i\in \mathbb{N}^{+}, \label{vi} %
  \quad v_{[i]} \geqslant z^{2^i} + \frac{{v_{[i]}}^2 + v_{[i+1]}}{2}\text{.}
\end{align}
Because this system is infinite, we are first going to pick a threshold
index after which the equations will be approximated; and we will
determine a good approximation for the remaining terms.

In order to find solutions, we need an initial interval for $z$, which
need not be especially precise: to this end, $0 \leqslant z \leqslant 1$
suffices (even though it is simple enough to argue that $1/4 < z < 1/2$).
The constant part of this recursive inequation is $\cramped{z^{2^i}}$,
thus it makes sense to let $\cramped{v_{[i]}}=K\cramped{z^{2^i}}$, which
we can then inject in our inequation. Dividing both sides by
$\cramped{z^{2^i}}$ and factoring, we obtain
\begin{align}\label{eq:otter-k}
  K \geqslant 1 + \frac{Kz^{2^{i}}}{2}\left(K+1\right)\text{.}
\end{align}

\paragraph{Choosing parameters} At this point we now have two parameters
to pick. First we have to find a constant $K$ satisfying
Inequation~\eqref{eq:otter-k}; $K$ can be as small as we want, as long as
$K >1$.

Once we have picked a threshold $\cramped{i_0}$, and the constant $K$
which will approximate terms $\cramped{v_{[i]}}$ for $i > i_0$, we can
exactly compute the initial terms. This is done by solving exactly the
quadratic equations,
\begin{align*}
  \frac 1 2 {v_{[i]}}^2 - v_{[i]} + \left(z^{2^i} + \frac 1 2
    v_{[i+1]}\right) = 0
\end{align*}
going backwards from $\cramped{i_0}-1$ to $0$, and with, as we said, the
remaining terms $\cramped{o_{[i]}} = K \cramped{z^{2^i}}$.

The approximations we have taken here will impact the failure rate, and we
can decrease it by taking any of the following measures: we can pick a
higher threshold $\cramped{i_0}$; we can pick a $z$ that is closer to the
singularity; we can use more than the constant part of the equation in the
step where we reject $\cramped{z^{2^{i+1}}}$ to approximate the terms
beyond the threshold.

This leads to an efficient sampler for Otter trees, of which we have drawn
a very large tree in Figure~\ref{fig:otter-17979}. Consider that this
allows for interesting empirical analyses of these trees.

\section{Conclusion\label{sec:conclusion}}

In this paper, we have proposed to integrate the classical idea of
rejection sampling to the Boltzmann sampler model, therefore relaxing the
condition that generating functions must be evaluated exactly.

The resulting model, which we call \emph{analytic samplers}, is fully
compatible with all prior approaches used in Boltzmann samplers (in
particular, these samplers can work well with Pivoteau~\etal's oracle),
and in fact provides sound theoretical ground by which to allow the
routine approximations that have been made in existing Boltzmann samplers.


But beyond that, we also believe the relaxed properties can allow for
possible improvements and simplifications in the way the samplers target
the size of their output. To illustrate these ideas, we show two types of
applications. First, with the example of simply generated trees, we
illustrate how tuning can be done in an alternate way, by using the added
degree of freedom of exploring points in a region instead of a curve.
Second, we show for Otter trees, that our samplers allow for much larger
approximation to be made with little side-effects.

Some details involve how to simulate the probabilities without resorting
to arbitrary precision: several answers exist, for instance in the form of
Buffon machine as introduced by Flajolet~\etal~\cite{FlPeSo11} or, because
we are not restricted to fixed curve, selecting only rational
probabilities, which can be easily simulated exactly, as shown by
Lumbroso~\cite{Lumbroso13}.

The open question is to determine whether these properties can be
leveraged for large combinatorial systems: indeed, the initial Boltzmann
paper was only illustrated by combinatorial classes defined as one or a
handful of equations. The real impressive strength of the oracle provided
by Pivoteau~\etal~\cite{PiSaSo12} was to be able to handle combinatorial
systems with thousands of equations. It remains to be seen if, when
dealing with much more complex polytopes, it is possible to use simple
refinements of the ideas we have shown for simply generated trees.
Finally, an topic which has not yet reached practical maturity is that of
multidimensional combinatorial classes (where the distribution is not
uniform, but biased according to some combinatorial parameter). The
article of Bodini and Ponty~\cite{BoPo10} has highlighted some issues,
which we believe our present framework might help bypass.

\section*{Acknowledgments}

We would like to acknowledge and thank the considerable feedback we have
received from anonymous referees both on a prior draft of this work, and
on this current submission at ANALCO~2015.

This work was financed in part by the French ANR \textsc{Magnum}. We would
also like to thank the LIA \textsc{Lirco} and J.-C. Aval and J.-F.
Marckert, as well as the ANR \textsc{QuasiCool} for travel funds which
have enabled the second author to give talks on this topic.

\bibliographystyle{plain}
\bibliography{algo,extra,extra2}{}

\end{document}